\documentclass[conference]{IEEEtran}
\IEEEoverridecommandlockouts
\usepackage{cite}
\usepackage{amsmath,amssymb,amsfonts} 
\usepackage{bm}                       
\usepackage{algorithmic}
\usepackage{graphicx}
\usepackage{textcomp}
\usepackage{xcolor}                    
\usepackage{booktabs} 
\usepackage{floatrow}
\usepackage{subfig}
\usepackage[font=small]{caption} 
\usepackage{framed}
\usepackage{pifont}
\let\labelindent\relax
\usepackage{enumerate}
\usepackage{enumitem}
\usepackage[colorlinks=true, linkcolor=blue, citecolor=blue, urlcolor=blue]{hyperref}
\usepackage[capitalize]{cleveref}
\newtheorem{theorem}{Theorem}

\def\BibTeX{{\rm B\kern-.05em{\sc i\kern-.025em b}\kern-.08em
    T\kern-.1667em\lower.7ex\hbox{E}\kern-.125emX}}

\begin{document}

%\title{UAV-enabled Victim-Sound-Detection-and-Location (V$^2$SDLS): System Design and Experiments}

\title{Sky-Ear: An Unmanned Aerial Vehicle-Enabled Victim Sound Detection and Localization System \vspace*{-0.6cm} }

\author{
    \IEEEauthorblockN{
        Yi Hong\IEEEauthorrefmark{1}\IEEEauthorrefmark{2},
        Mingyang Wang\IEEEauthorrefmark{1}\IEEEauthorrefmark{2}, 
        Yalin Liu\IEEEauthorrefmark{1}\IEEEauthorrefmark{3}, 
        Yaru Fu\IEEEauthorrefmark{1}, 
        Kevin Hung\IEEEauthorrefmark{1}, 
        Bishenghui Tao\IEEEauthorrefmark{1}
    }
    \IEEEauthorblockA{
        \IEEEauthorrefmark{1}School of Science and Technology, Hong Kong Metropolitan University, Hong Kong 999077, China\\
        \IEEEauthorrefmark{2}Equal contribution, \IEEEauthorrefmark{3}Corresponding author.\\
        E-mails: yhonghk@gmail.com, wangmingyang079@outlook.com, 
        \{ylliu, yfu, khung, btao\}@hkmu.edu.hk
        \vspace*{-1.1cm}
    }
    \thanks{This work was supported in part by the Research Grants Council of the Hong Kong Special Administrative Region, China, under Project UGC/FDS16/E15/24, in part by the Hong Kong Metropolitan University Research Grant under Project PFDS/2025/34, and in part by the UGC Research Matching Grant Scheme under Project 2024/3003 and Gekko Lab.}
}

\maketitle

\begin{abstract}
Unmanned Aerial Vehicles (UAVs) are increasingly deployed in search-and-rescue (SAR) missions, yet continuous and reliable victim detection and localization remain challenging due to on-board hardware constraints. This paper designs an UAV-Enabled Victim Sound Detection and Localization System (called ``Sky-Ear'' for brevity) to achieve energy-efficient acoustic sensing and sound detection for SAR. Sky-Ear enables the ``ear'' of the UAV with a circular-shaped microphone array, and the array conducts continuous audio recordings during the UAV's flight. In Sky-Ear, a two-stage (Sentinel and Responder) audio processing method is developed for energy-consuming and highly reliable sound detection. In the Sentinel stage, a Masking autoencoder (MAE)-based sound detection mechanism is designed to analyze frequency-time acoustic features. For improved precision, a continuous localization method is designed by optimizing detected directions from multiple observations. Extensive simulation experiments are conducted to validate the system's performance in terms of victim detection accuracy and localization error. 
\end{abstract}

\begin{IEEEkeywords}
Unmanned Aerial Vehicles, Search and Rescue, Acoustic Sensing, Anomaly Detection, Continuous Localization
\end{IEEEkeywords}

\vspace{-0.2cm}
\section{Introduction}
\vspace{-0.1cm}
With high maneuverability, emerging unmanned aerial vehicles (UAVs) offer a promising solution for rapid victim search and rescue (SAR) in complex terrains such as dense forests and deserts~\cite{lyu2023unmanned}. Mainstream UAVs primarily rely on visual sensing systems, i.e., applying cameras to capture images or videos, thus enabling high-fidelity object detection and tracking for SAR~\cite{app14020766}. However, these visual systems impose a heavy payload burden on hardware-constrained UAVs and are vulnerable to line-of-sight (LoS) obstructions like forest canopies or thick fog in SAR scenarios~\cite{app14020766}. Thermal infrared imaging can reduce payload weight but works only at night and loses reliability in high-heat backgrounds or when survivors are hidden under debris or thick clothing~\cite{katkuri2024autonomous}. In contrast, acoustic sensing systems, i.e., recording the audio signals, provide reliable spatio-temporal information even when visual or thermal signals are severely distorted. Meanwhile, a lightweight on-board payload is enough to process audio signals, making the acoustics sensing system a sustainable, reliable, and cost-effective solution for UAV-enabled SAR missions~\cite{fraternali2025enhancing}.

In acoustic sensing, microphone array-based audio processing and Time Difference of Arrival (TDoA) estimation have laid the theoretical foundation for sound source localization, and thus are widely applied in robot audition, surveillance, and human-computer interaction systems~\cite{mikesikowska2024classification}. Importantly, these technologies require always-on multi-channel microphone arrays running complex beamforming algorithms, which introduce inevitable computational overhead and energy consumption~\cite{s19194326}. However, in UAV-enabled SAR missions, during flights lasting several hours, the blind search phase (lacking effective victim signals) often occupies more than 90\% of the total mission time~\cite{11009148}. Maintaining a full-power microphone array and high-frequency spatial filtering during these prolonged silent periods is redundant, directly damaging the energy efficiency. Conversely, while low-power periodic sleep architectures have been widely explored, they often result in an intolerable miss-detection rate for audio streams~\cite{deAlcantaraAndrade2019}.

\begin{figure*}[t]
    \centering
    \includegraphics[width=16.2cm]{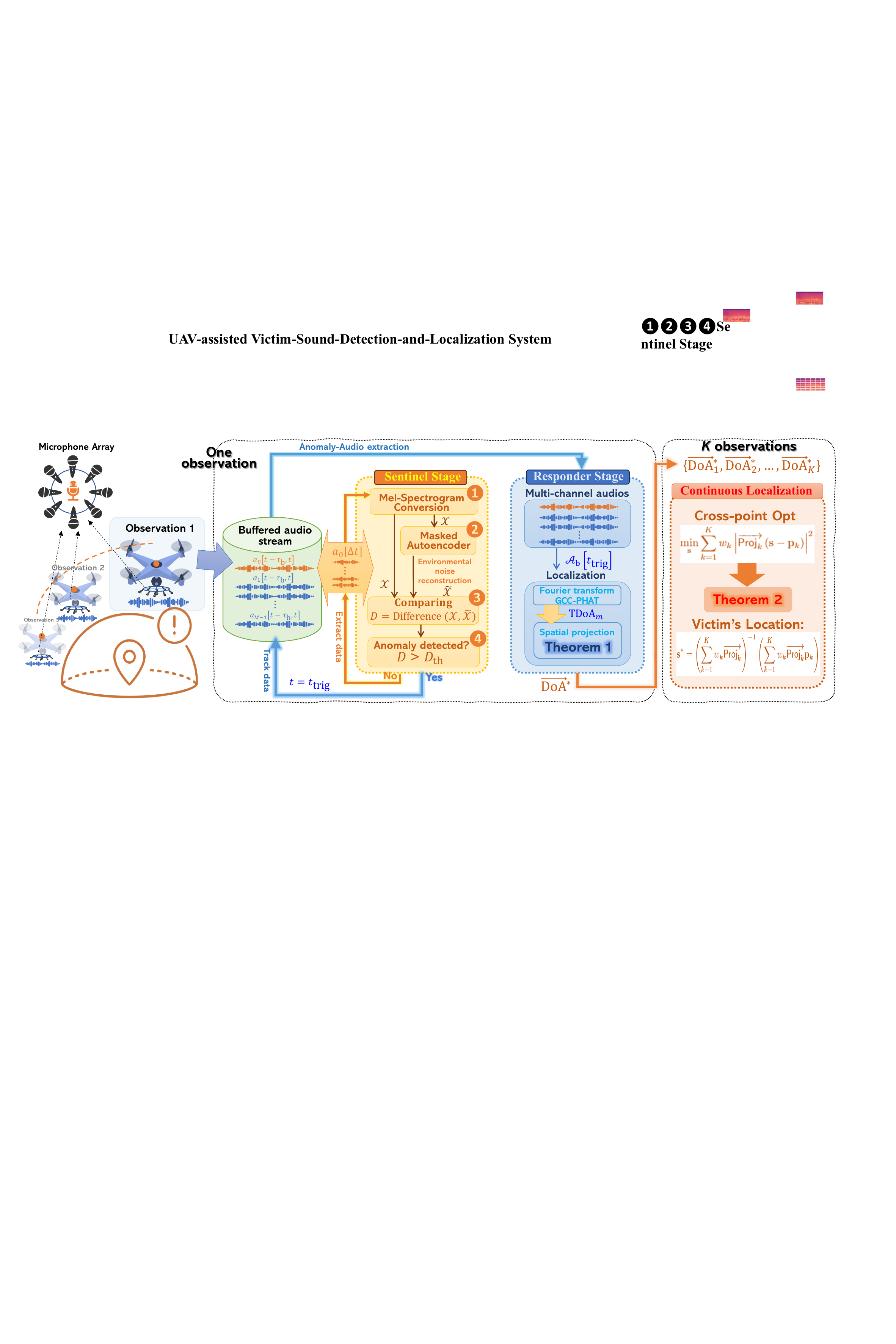}
    \caption{An Unmanned Aerial Vehicle-Enabled Victim Sound Detection and Localization System (Sky-Ear).}
    \label{fig:system_architecture}
     \vspace{-0.4cm}
\end{figure*}

To mitigate the computational burden of multi-channel audio processing while maintaining effective victim sound detection, we propose an energy-efficient two-stage audio processing method. Unlike existing sound event detection methods that rely on continuously turned on microphone arrays, Sky-Ear divides acoustic inspection and localization into two stages. In our method, the single-channel-based Sentinel stage is responsible for continuous listening and anomaly detection. Once an anomaly is detected, i.e., the suspected victim's sound, the multi-channel-based Responder stage is triggered for accurate localization. Along the UAV's flight trajectory, multiple two-stage observations are performed for continuous localization with improved precision. The main contributions of this paper are summarized below:
\vspace{-0.1cm}

\begin{itemize}[leftmargin=*,align=left]
%\small
    \item An Unmanned Aerial Vehicle-Enabled Victim Sound Detection and Localization System (Sky-Ear) is designed to achieve efficient acoustic sensing and sound detection for SAR. Based on a circular-shaped microphone array, two-stage (Sentinel and Responder) audio processing is developed for energy-efficient and reliable sound detection. For improved precision, a continuous localization is used by optimizing detected directions from multiple observations. 
    \item A Masking autoencoder (MAE)-based sound detection method is designed in the Sentinel stage. MAE processes the audio's Mel-spectrogram to accurately analyze frequency-time acoustic features. The suspected victim's sound is detected as an anomaly by comparing the recorded audio with a learned noise from MAE. Multiple MAE models are trained to adapt to varying noise for different SAR scenarios. 
    \item Extensive simulation experiments are conducted to validate the system's performance in terms of victim detection and localization. The comprehensive audio datasets of environmental background noises, drone propeller noises, and victim sounds are created to fine-tune and verify the performance of MAE in different SAR scenarios. The effectiveness of the continuous localization is validated by obtaining the improved localization via multiple observations.
\end{itemize}
\vspace{-0.2cm}

\section{System Design}
\vspace{-0.1cm}
%As shown in \cref{fig:system_architecture}, %we design a UA\underline{V}-enabled \underline{V}ictim-\underline{S}ound-\underline{D}etection-and-\underline{L}ocalization \underline{S}ystem (V$^2$SDLS). In 
\cref{fig:system_architecture} shows the design of ``Sky-Ear'', where a UAV flies over a search region to perform a SAR mission, i.e., searching and localizing the victim. The UAV mounts an $M$-element circular microphone array, denoted by $\mathcal{M}=\{0,1,...,M-1\}$, where the microphone $0$ is located in the array center and $\mathcal{M}\setminus 0$ are uniformly distributed as a circular array around the center. %The central microphone $0$ and the surrounding circular array $\mathcal{M}\setminus 0$ are installed at different heights and therefore have different 3-dimensional(3D) coordinates. 
During the UAV's flight, the microphone array continuously records an $M$-channel audio stream, denoted by $\mathcal{A}=\{a_{m},m\in \mathcal{M}\}$. The collected audio $\mathcal{A}$ is then used to detect the victim's sound and localize the victim. Particularly, the system performs a continuous two-stage procedure:
%\vspace{-0.1cm}

{\small
%\begin{framed}
\begin{enumerate}[leftmargin=*,align=left]
    \item At a hovering point, the UAV performs one observation, i.e., collecting audio streams $\mathcal{A}$, and executes a two-stage processing:
    \begin{enumerate}[leftmargin=*,align=left]
        \item Sentinel Stage, using $a_{0}$ for ``listening and detection'', and 
        \item Responder stage, using $\mathcal{A}$ for ``full-channel localization''.
    \end{enumerate}
    \item During the flight, the UAV hovers at multiple points and performs multiple observations, thus obtaining multiple localized directions of the victim. Summarizing multiple observed results, the system achieves continuous localization with improved precision.
\end{enumerate}
%\end{framed}
}
Due to the UAV's storage constraint, %the recorded audio streams need to be continuously updated. Therefore, 
a ring buffer mechanism is considered to record the continuous audio streams, i.e., updating $\mathcal{A}$ in the ring buffer with the length $\tau_{\mathrm{b}}$. Let $\mathcal{A}_{\mathrm{b}}$ be the buffered audio stream at the time $t$, which is given by
\vspace{-0.2cm}

{\small 
\begin{equation*}
    \mathcal{A}_{\mathrm{b}}=\{a_{m}[t-\tau_{\mathrm{b}},t],m\in \mathcal{M}\}.
\end{equation*}
}%
%where $t$ is the time of an observation. 
The audio for two-stage processing of one observation is extracted from $\mathcal{A}_{\mathrm{b}}$. Overall, the two-stage procedure, the ring buffer mechanism, and the continuous localization ensure low energy consumption while achieving high-precision localization of the victim. Next, we give the detailed design. 
%During Sentinel stage, an abnormal-sound detection algorithm - SpecMAE is designed and used to continuously process the sound-flow and adaptively detect abnormal sounds (the suspected victim's sound, e.g., screaming or shouting for help) from the normal environmental sounds (e.g., UAV propeller noise and environmental background noise). Once an abnormal signal is detected, the system is triggered to perform Responder stage for high-precision localization. 
%Throughout the SAR mission cycle, the full array $\mathcal{M}$ remains in a continuous recording state.
\vspace*{-0.2cm}
\subsection{Sentinel Stage: MAE-based Anomaly Detection}
\label{subsec: sentinel_stage}
\vspace*{-0.1cm}
Given the audio stream $a_{0}[t-\tau_{\mathrm{b}},t]$, the Sentinel stage is to detect the anomaly-audio sequence, i.e., where the suspected victim's sound occurs, e.g., screaming or shouting. Masked autoencoder (MAE) is used to analyze the accurate acoustic feature in Mel-spectro and detect the anomaly by continuously browsing a group of audio clips $\{a_{0}[\Delta t]\}$ from $a_{0}[t-\tau_{\mathrm{b}},t]$, where $[\Delta t]\subset [t-\tau_{\mathrm{b}},t]$ is the audio length to be processed. %Particularly, MAD continuously processes a short audio clip $a_{0}[\Delta t]$ from $a_{0}[t-\tau_{\mathrm{b}},t]$, where $\Delta t\subset  [t-\tau_{\mathrm{b}},t]$ is the audio clip's period and $i=\{1,2,...\}$ is the clip index. %To adaptively learn different sound features, the clip $a_{0}[\Delta t]$ is converted to a 2-dimensional (2D) log-Mel spectrogram image (with the frequency and time dimensions).
The MAE-based anomaly detection includes 4 steps:

{\small
\begin{enumerate}[leftmargin=*,align=left]
    \item[\ding{172}] The coming audio clip $a_{0}[\Delta t]$ is first converted to a two-dimensional (2D) Mel-spectrogram image, denoted by $\mathcal{X}\in \mathbb{R}^{F\times T}$, where $F\times T$ indicates frequency and time dimensionality. % $\mathcal{X}$ captures the frequency-time dimensional sound features.
    \item[\ding{173}] A MAE model is used to mask partial info from $\mathcal{X}$ and reconstruct a Mel-spectrogram image $\tilde{\mathcal{X}}$ with accurate sound features from environments, e.g., drone propeller noise and background noise. 
    \item[\ding{174}] The difference $D_{\mathrm{re}}$ between $\mathcal{X}$ and $\tilde{\mathcal{X}}$ is calculated to detect an anomaly. The value of $D_{\mathrm{re}}$ is obtained from a Top$-K$ scoring strategy to reduce the impact of large-energy background noises. 
    \item[\ding{175}] Let $D_{\mathrm{th}}$ be a threshold for detecting an anomaly. If and only if $D_{\mathrm{re}} > D_{\mathrm{th}}$, an anomaly is detected and triggers the Responder stage for the subsequent victim's localization. 
\end{enumerate}
}%
The efficacy of the above procedure lies in the accurate noise audio reconstruction capability of the used MAE model. %To achieve this, we pre-train and fine-tune multiple MAE models to adapt to different environments, e.g., desert and forest. 
The detailed workflow for anomaly detection, pre-training, and fine-tuning of the MAE model is given in \cref{sec:specmae,sec:experiments}.

\begin{figure*}
\begin{framed}
\vspace{-0.3cm}
\begin{theorem} 
Given $\mathcal{A}_{\mathrm{b}}[t_{\mathrm{trig}}]$ in one observation, $\overrightarrow{\mathsf{DoA}}^{*}$ is calculated by
\vspace{-0.3cm}

{\footnotesize
    \begin{align*}
    &\overrightarrow{\mathsf{DoA}}^{*}
    =(\mathbf{G}^T \mathbf{G})^{-1} \mathbf{G}^T \mathbf{V},\text{ where }
    \mathbf{G} = [G_1; G_2; \dots; G_{M-1}], \mathbf{V} 
    = \left [ V_{1},...,V_{M-1} \right ]^T, 
    \forall m \in \mathcal{M}\setminus 0 : G_{m}=(\mathbf{r}_{m} - \mathbf{r}_0)^T,\\
    &V_{m}=\mathsf{TDoA}_{m} \cdot v_{\mathrm{s}},a_{m}'=a_{m}[ t_{\mathrm{trig}}-\tau_{\mathrm{w}},t_{\mathrm{trig}}],
    \mathsf{TDoA}_m 
    = \arg\max_{\tau} \int_{-\infty}^{\infty} \frac{\mathcal{F}(a_{m}') \mathcal{F}^*(a_{0}')}{|\mathcal{F}(a_{m}') \mathcal{F}^*(a_{0}')|} e^{j2\pi f\tau} \mathrm{d}f,
    \end{align*}
}%
\vspace{-0.4cm}

\noindent $a_{0}'=a_{0}[ t_{\mathrm{trig}}-\tau_{\mathrm{w}},t_{\mathrm{trig}}]$, $\mathcal{F}\{\cdot\}$ is the Fourier transform, $\mathcal{F}^*\{\cdot\}$ is its complex conjugate, and $v_{\mathrm{s}}$ is the speed of sound. %For $m \in \mathcal{M}\setminus 0 $, $G_{m}=(\mathbf{r}_{m} - \mathbf{r}_0)^T$, $V_{m}=\mathsf{TDoA}_{m} \cdot v_{\mathrm{s}}$, and $a_{m}'=a_{m}[ t_{\mathrm{trig}}-\tau_{\mathrm{w}},t_{\mathrm{trig}}]$.
\label{the1}
\end{theorem}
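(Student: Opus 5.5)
The statement packages two standard estimators, and I would prove it in three layers: a far-field signal model, the GCC-PHAT characterization of each $\mathsf{TDoA}_m$, and a least-squares solution of the resulting linear system.

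\textbf{Signal model and linear system.} I would first fix a far-field model for the Responder stage. The victim's source is far compared with the array aperture, so within the window $[t_{\mathrm{trig}}-\tau_{\mathrm{w}},t_{\mathrm{trig}}]$ the wavefront reaching the array is planar. Its propagation is described by an unknown direction vector $\overrightarrow{\mathsf{DoA}}$, which I take to be the unit vector pointing from the source toward the array. Microphone $m$ then receives $a_m'(t)\approx a_0'(t-\mathsf{TDoA}_m)$, up to additive noise, with true delay
\[
\mathsf{TDoA}_m=\frac{(\mathbf{r}_m-\mathbf{r}_0)^T\overrightarrow{\mathsf{DoA}}}{v_{\mathrm{s}}} .
\]
This plane-wave geometry is the first thing to set out explicitly. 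Multiplying by $v_{\mathrm{s}}$ gives one scalar equation $G_m\overrightarrow{\mathsf{DoA}}=V_m$ for each $m\in\mathcal{M}\setminus 0$. Stacking these $M-1$ equations yields the overdetermined system $\mathbf{G}\,\overrightarrow{\mathsf{DoA}}=\mathbf{V}$.

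\textbf{Delay estimation by GCC-PHAT.} Next I would justify the formula for $\mathsf{TDoA}_m$. Under the delay model,
\[
\mathcal{F}(a_m')\,\mathcal{F}^*(a_0')=|\mathcal{F}(a_0')|^2e^{-j2\pi f\,\mathsf{TDoA}_m}.
\]
The phase transform divides by the modulus, leaving the pure phase $e^{-j2\pi f\,\mathsf{TDoA}_m}$. Its inverse transform is $\delta(\tau-\mathsf{TDoA}_m)$, so the $\arg\max$ over $\tau$ recovers exactly $\mathsf{TDoA}_m$. With noise present, the maximizer is the maximum-likelihood-type estimate of that peak.

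\textbf{Least-squares step.} Because the measured $V_m$ are noisy and there are $M-1\ge 3$ equations in 3 unknowns, I would define $\overrightarrow{\mathsf{DoA}}^{*}$ as the minimizer of $\|\mathbf{G}\mathbf{x}-\mathbf{V}\|^2$. Setting the gradient to zero gives the normal equations $\mathbf{G}^T\mathbf{G}\mathbf{x}=\mathbf{G}^T\mathbf{V}$. The objective is convex, so this stationary point is the global minimum, and it is unique whenever $\mathbf{G}^T\mathbf{G}$ is invertible.

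\textbf{Main obstacle: invertibility of $\mathbf{G}^T\mathbf{G}$.} Invertibility requires the rows $\mathbf{r}_m-\mathbf{r}_0$ to span $\mathbb{R}^3$. If microphone $0$ lies in the same plane as the circular ring, every row is horizontal, $\mathbf{G}$ has rank 2, and $\mathbf{G}^T\mathbf{G}$ is singular. I would therefore assume, as the paper's setup suggests, that microphone $0$ is mounted at a different height from the ring. The rows are then $(R\cos\theta_m,R\sin\theta_m,-h)$ with $h\neq 0$. For $M-1\ge 3$ equally spaced angles,
\[
\mathbf{G}^T\mathbf{G}=\mathrm{diag}\!\left(\tfrac{(M-1)R^2}{2},\tfrac{(M-1)R^2}{2},(M-1)h^2\right),
\]
which is invertible. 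Under that assumption, and treating the noise as zero-mean i.i.d. so that least squares is the natural estimator, the displayed formula follows. If desired, $\overrightarrow{\mathsf{DoA}}^{*}$ can be normalized to unit length afterwards.
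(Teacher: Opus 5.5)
Your proposal is correct and follows essentially the same route the paper indicates, though its proof sits in an external appendix I could not check: a far-field plane-wave model in which $v_{\mathrm{s}}\,\mathsf{TDoA}_m$ is the projection of $\overrightarrow{\mathsf{DoA}}$ onto $\mathbf{r}_m-\mathbf{r}_0$, GCC-PHAT to obtain the delays, and the normal equations of the stacked system $\mathbf{G}\,\overrightarrow{\mathsf{DoA}}=\mathbf{V}$. Your explicit sign convention (harmless for Theorem~2, whose projector $\mathbf{I}-\overrightarrow{\mathsf{DoA}}_k^{*}(\overrightarrow{\mathsf{DoA}}_k^{*})^T$ is sign-invariant) and your requirement that microphone $0$ sit off the ring plane so that $\mathbf{G}^T\mathbf{G}$ is invertible supply hypotheses the statement leaves implicit; you should, however, drop the aside calling the PHAT peak a maximum-likelihood-type estimate, since that is not true in general and is not needed, because the statement simply defines $\mathsf{TDoA}_m$ by that formula.
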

%\vspace{-0.5cm}
\noindent \textit{Proof.} The proof is provided in \cite[Appendix~A]{complementary}. \hfill $\blacksquare$

\begin{theorem} 
    Based on $K$ observations, $\mathbf{s}^{*}$ is calculated by $\mathbf{s}^{*} = \left ( \sum_{k=1}^K  \overrightarrow{\mathsf{Proj}_k}\right )^{-1} \left ( \sum_{k=1}^K  \overrightarrow{\mathsf{Proj}_k}\mathbf{p}_k \right )$, where \\$\overrightarrow{\mathsf{Proj}_k}=\mathbf{I} - \overrightarrow{\mathsf{DoA}}_{k}^{*} (\overrightarrow{\mathsf{DoA}}_{k}^{*})^T$, $\mathbf{I}$ is the $3 \times 3$ identity matrix and $\mathcal{F}^{-1}\{\cdot\}$ denotes the inverse Fourier transform.
\label{the2}
\end{theorem}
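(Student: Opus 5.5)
The formula for $\mathbf{s}^{*}$ should be the minimizer of a least-squares triangulation problem, so the plan is to pose that problem and solve its first-order optimality condition.

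\textbf{Setup.} In the $k$-th observation the UAV is at the hovering position $\mathbf{p}_k$. Theorem~\ref{the1} supplies a direction $\overrightarrow{\mathsf{DoA}}_{k}^{*}$, which we normalize to unit length. The victim at $\mathbf{s}$ should then lie on the ray $\mathbf{p}_k + \lambda \overrightarrow{\mathsf{DoA}}_{k}^{*}$. Because the estimated directions are noisy, the $K$ rays generally do not meet. We therefore define $\mathbf{s}^{*}$ as the point minimizing the sum of squared perpendicular distances to the $K$ lines.

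\textbf{Distance to one line.} Decompose $\mathbf{s}-\mathbf{p}_k$ into a component along $\overrightarrow{\mathsf{DoA}}_{k}^{*}$ and a component orthogonal to it. Since the direction is a unit vector, $\overrightarrow{\mathsf{Proj}_k}=\mathbf{I} - \overrightarrow{\mathsf{DoA}}_{k}^{*} (\overrightarrow{\mathsf{DoA}}_{k}^{*})^T$ is the orthogonal projector onto the complement of that direction. Hence the perpendicular distance is $\|\overrightarrow{\mathsf{Proj}_k}(\mathbf{s}-\mathbf{p}_k)\|$. The projector is symmetric and idempotent, so the squared distance equals $(\mathbf{s}-\mathbf{p}_k)^T\overrightarrow{\mathsf{Proj}_k}(\mathbf{s}-\mathbf{p}_k)$. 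The objective is
\[
J(\mathbf{s})=\sum_{k=1}^K (\mathbf{s}-\mathbf{p}_k)^T\overrightarrow{\mathsf{Proj}_k}(\mathbf{s}-\mathbf{p}_k).
\]
$J$ is convex because each $\overrightarrow{\mathsf{Proj}_k}$ is positive semidefinite.

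\textbf{Optimality condition.} Setting $\nabla J = 2\sum_k \overrightarrow{\mathsf{Proj}_k}(\mathbf{s}-\mathbf{p}_k)=0$ gives the normal equation
\[
\Big(\sum_k \overrightarrow{\mathsf{Proj}_k}\Big)\mathbf{s}=\sum_k \overrightarrow{\mathsf{Proj}_k}\mathbf{p}_k .
\]
Inverting the left-hand matrix yields the stated formula. By convexity this stationary point is the global minimizer, and it is unique once the matrix is invertible.

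\textbf{Main obstacle: invertibility.} I need $\sum_k \overrightarrow{\mathsf{Proj}_k}$ to be invertible. If $\mathbf{x}^T(\sum_k \overrightarrow{\mathsf{Proj}_k})\mathbf{x}=0$, then $\overrightarrow{\mathsf{Proj}_k}\mathbf{x}=0$ for every $k$. That forces $\mathbf{x}$ to be parallel to every $\overrightarrow{\mathsf{DoA}}_{k}^{*}$, which is impossible as soon as two of the directions are non-parallel. So I will state, as an assumption, that $K\ge 2$ and at least two estimated directions are non-collinear. This holds in practice because the UAV observes from different hovering points.

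The remark about $\mathcal{F}^{-1}$ in the statement plays no role in this derivation. The GCC-PHAT correlation it refers to is already handled inside Theorem~\ref{the1}.
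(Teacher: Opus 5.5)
Your proposal is correct and follows essentially the same route as the paper. Both pose $\mathbf{s}^{*}$ as the least-squares point minimizing the summed squared orthogonal distances to the $K$ lines $\mathbf{p}_k+\lambda\overrightarrow{\mathsf{DoA}}_{k}^{*}$, write each distance with the projector $\overrightarrow{\mathsf{Proj}_k}$, and solve the normal equations. Two things the statement leaves implicit are sound additions on your part: normalizing the least-squares output of Theorem~\ref{the1} to unit length, which is needed for $\overrightarrow{\mathsf{Proj}_k}$ to be a projector, and requiring non-collinear directions so that $\sum_k\overrightarrow{\mathsf{Proj}_k}$ is invertible.
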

%\vspace{-0.3cm}
\noindent \textit{Proof.} The proof is provided in \cite[Appendix~B]{complementary}. \hfill $\blacksquare$
\vspace{-0.25cm}
\end{framed}
\vspace*{-0.2cm}
\end{figure*}

\vspace*{-0.1cm}
\subsection{Ring Buffer: Multi-Channel Audio Tracking}
\vspace*{-0.1cm}
%To ensure that the Responder can access the complete anomalous audio signal for high-precision localization after an anomaly is triggered, 
%the system has introduced a Retroactive Data Bridge, 
%a historical data tracking mechanism. 
% During the Sentinel listening stage, although the abnormal audio signal detected by the responder array $\mathcal{M}_{\mathrm{aux}}$ is not used for computation, the continuous sampling $y_i(t)$ from any microphone $i \in \mathcal{M}_{\mathrm{aux}}$ is continuously pushed into a ring buffer $\mathcal{B}$ of length $\tau_{\mathrm{b}}$ in real-time, expressed as
% \begin{equation}
% \mathcal{B}_i(t) = \left\{ y_i(\tau) \mid \tau \in [t - \tau_{\mathrm{b}}, t] \right\}.
% \label{eq:buffer}
% \end{equation}
Upon detecting an anomaly during the Sentinel stage, the system is triggered to perform the Responder stage. Let $t_{\mathrm{trig}}$ denote the exact timestamp of the triggering event. %Then the system immediately performs historical extraction of the anomaly-audio stream from the $M-1$ microphones $\mathcal{M}\setminus 0$. %the buffered audio stream of $\mathcal{M}\setminus 0$, i.e., $\{a_{m}[T_{\mathrm{b}}],T_{\mathrm{b}}\subset [t-\tau_{\mathrm{b}},t],t=t_{\mathrm{trig}},m\in \mathcal{M}\setminus 0\}$. %To prepare the processed audios for Responder stage, the system retroactively extracted multi-channel time-domain data frame $S_{k}[T_{\mathrm{win}}]$ from the initial continuous data stream is defined as
To prepare the processed audio sequences for the Responder stage, the system needs to track the historical anomaly-audio stream collected from the $M$ microphones. %, i.e., from the audio buffer $\mathcal{A}_{\mathrm{b}}$. 
Let $\mathcal{A}_{\mathrm{b}}[t_{\mathrm{trig}}]$ be the extracted $M$ channel audio sequences, which is given by
\vspace{-0.3cm}

{\small 
\begin{align}
&\mathcal{A}_{\mathrm{b}}[t_{\mathrm{trig}}] = \{a_m[t_{\mathrm{trig}}-\tau_{\mathrm{w}},t_{\mathrm{trig}}],m\in \mathcal{M}\}, \label{eq:truncation_t}
\end{align}
}%
where $\tau_{\mathrm{w}}\leq \tau_{\mathrm{b}}$ is the extracted sequence length, this ensures that the extracted sequence is securely retained within the maximum reserve capacity $\tau_{\mathrm{b}}$ of the circular buffer. To ensure the statistical completeness of $\mathcal{A}_{\mathrm{b}}[t_{\mathrm{trig}}]$ for Responder stage, we define $\tau_{\mathrm{w}}$ as the valid retroactive period given by $\tau_{\mathrm{w}} = \tau_{\mathrm{retro}} + \tau_{\mathrm{post}}$, where the retroactive window $\tau_{\mathrm{retro}}$ captures the anomaly's onset prior to detection, and the post-observation window $\tau_{\mathrm{post}}$ provides sufficient data resolution for cross-correlation.

\vspace*{-0.2cm}
\subsection{Responder Stage: Multi-Channel Localization}
\label{subsec:responder}
\vspace*{-0.1cm}
%During the UAV's flight, a specific observation point (i.e., hovering at a position) can be triggered to perform the instant localization of the victim(s) based on $\mathcal{A}_{\mathrm{b}}[t_{\mathrm{trig}}]$. 
Based on the extracted $M$ audio sequences, i.e., $\mathcal{A}_{\mathrm{b}}[t_{\mathrm{trig}}]$ that contains the detected anomaly-audio signal, the Responder stage makes instant localization of the victim. Particularly, the Direction of Arrival (DoA) of the suspected victim's sound at the UAV can be calculated from $\mathcal{A}_{\mathrm{b}}[t_{\mathrm{trig}}]$. The DoA of the victim's sound is denoted by the 3D unit vector $\overrightarrow{\mathsf{DoA}} = [\mathsf{DoA}_x, \mathsf{DoA}_y, \mathsf{DoA}_z]^T$. Let $\mathbf{r}_m= [\mathsf{r}_x, \mathsf{r}_y, \mathsf{r}_z]^T$ ($\forall m \in \mathcal{M}$) be the known 3D coordinate vectors of the $m$-th microphone. Let $\mathsf{TDoA}_m$ ($\forall m \in \mathcal{M}$) is TDoA of the anomaly signal at the microphone array $\mathcal{M}\setminus 0$ compared with the central microphone. %Based on the far-field plane wave assumption, acoustic waves emitted by the victim and arriving at the microphone array are approximated as parallel wavefronts. Therefore, t
The time difference $\mathsf{TDoA}_m$ is theoretically modeled as the spatial projection of the unknown victim direction $\overrightarrow{\mathsf{DoA}}$ onto the rays to the microphones. Thus, the optimal $\overrightarrow{\mathsf{DoA}}^{*}$ is derived in \cref{the1}.

\vspace{-0.25cm}
\subsection{UAV-based Continuous Localization}
\vspace*{-0.1cm}
%Let the spatial location of the victim be $s = [x_s, y_s, z_s]^T$. 
The calculated $\overrightarrow{\mathsf{DoA}}^{*}$ in \cref{the1} estimates the direction of the victim at a single observation, i.e., the UAV hovers at a position. To make a precise localization of the victim, we design a continuous Localization Mechanism as follows:
\begin{enumerate}[leftmargin=*,align=left]
    \item Multiple observations along the UAV's flight trajectory are considered to collect the victim's directions. 
    \item The victim's localization is determined by using a globally optimized cross-point of multiple observed directions. 
\end{enumerate}
%At the $k$-th distinct spatial position $\mathbf{p}_k$, the system captures a corresponding direction vector $\hat{\mathbf{u}}_k$ with a confidence weight $w_k$.Ideally, these $K$ spatial rays intersect exactly at the target point $s$. However, due to inevitable measurement noise, they typically form non-intersecting skew lines in 3D space. 
%To resolve this robustly, we transform the multi-view localization into an optimization problem: finding an optimal point $\mathbf{s}$ that minimizes the sum of the weighted squared orthogonal distances to all $K$ rays. 
Given $K$ observations, let $\mathbf{p}_k$ and $\overrightarrow{\mathsf{DoA}}_{k}^{*}$ denote the UAV's 3D coordinates (known along the trajectory) and the calculated DoA (referring to \cref{the1}) at the observation $k\in \{1,2,...,K\}$. Let $\mathbf{s} = [x_s, y_s, z_s]^T$ be the 3D coordinates of the victim. By summarizing multiple observed directions, the globally optimized cross-point $\mathbf{s}^{*}$ is derived in \cref{the2}.

\vspace{-0.15cm}
\section{MAE For Anomaly Detection} \label{sec:specmae}
\vspace*{-0.2cm}
%This section introduces the MAE model, the MAE-based anomaly detection, and the pretraining of MAE models.
\subsection{The MAE Model}
\vspace*{-0.1cm}
\subsubsection{Mel-spectrogram Discretization and Masking}
%\vspace*{-0.1cm}
To prepare a Mel-spectrogram image $\mathcal{X}\in \mathbb{R}^{F\times T}$ for the MAE, we discretize $\mathcal{X}$ into $N$ uniform image patches (with the patch size $P\times P$) along its frequency-time dimensions. Then, the discretized $\mathcal{X}$ is deemed as a sequence of patches, denoted by
\vspace{-0.8cm}

{\small 
\begin{align*}
    \mathbf{X} = \{X_n, n\in \{1,2,...,N\} \},
\end{align*}
}%
\vspace{-0.5cm}

\noindent where $X_n\in \mathbb{R}^{P \times P}$ and $N=(F/P)\times (T/P)$. For $n\in\{1,2,...,N\}$, with the frequency-time domain acoustic features in a discretized Mel-Spectrogram image, $X_n$ will be compared with its background noise to detect the anomaly. The background noise is reconstructed from a masked version of $X_n$. Particularly, a masked patch sequence $\ddot{\mathbf{X}}$ is generated by randomly hiding the info of $N\time \rho$ patches with $\rho$ being a masking ratio, %. %To reduce computational overhead, the hidden patches for the index set $\mathcal {U}$ are explicitly discarded, only performs operations on the unmasked patches, thus, $\ddot{\mathbf{X}}$ is given by
% \vspace{-0.3cm}
% {\small 
% \begin{align*}
%     \ddot{\mathbf{X}} = \{X_i,i\in \mathcal{V} \},
% \end{align*}
% }%
which is given by $\ddot{\mathbf{X}} = \{X_i,i\in \mathcal{V} \}$, where $\mathbf{0}\in \mathbb{R}^{P \times P}$, $|\mathcal{U}|=N\rho$, and $\mathcal{V} + \mathcal{U} = \{1,2,...,N\}$.

% \begin{figure}[t]
%     \centering
%     \includegraphics[width=7.5cm]{Images/SpecMAE _Architecture.png}
%     \caption{The architecture of the SpecMAE.}
%     \label{fig:SpecMAE _Architecture}
% \end{figure}

% Subsequently, we re-concatenate the classification patch to the head of the sequence and add the 2D positional encoding $\mathbf{P}_{\mathrm{dec}}$, yielding the final input $\mathbf{Z}_{\mathrm{dec}}$ for the decoder, given by
% \begin{equation}
% \mathbf{Z}_{\mathrm{dec}} = [\mathbf{z}_{\mathrm{cls}}; \mathbf{h}_1; \mathbf{h}_2; \dots; \mathbf{h}_N] + \mathbf{P}_{\mathrm{dec}}.
% \label{eq:decoder_in}
% \end{equation}
% \subsubsection{Encoder}
%The encoder $f_ {\mathrm{enc}} (\cdot)$ aims to map the initial patch $\mathbf{X}$ block to a high-dimensional matrix representation. 

\subsubsection{Encoder}
%the encoder $f_{\mathrm{enc}}(\cdot)$ only processes the patches in the visible set $\mathcal{V}$ to , map the initial patch $\mathbf{X}$ block to a high-dimensional matrix representation, extracting deep acoustic features. For each visible patch $X_v$ ($v \in \mathcal{V}$), the system first maps it into a $D$-dimensional latent space through a linear embedding matrix $\mathbf{E} \in \mathbb{R}^{P^2 \times D}$, superimposes the encoder-specific positional encoding $\mathbf{P}_{\mathrm{enc}}$, and prepends a learnable classification patch $\mathbf{z}_{\mathrm{cls}} \in \mathbb{R}^D$ to the sequence. After being processed by multiple Transformer blocks, the encoder outputs the deep visible feature sequence $\mathbf{Z}_{\mathrm{enc}}$, which is given by
The encoder (a Transformer) learns the deep acoustic feature from the unmasked patches $\{X_i,i\in \mathcal{V} \}$ by outputting a 1D feature sequence $\mathbf{Z}_{\mathrm{enc}}$ as follows
\vspace{-0.3cm}

{\small
\begin{align}
    &\mathbf{Z}_{\mathrm{enc}} = f_{\mathrm{enc}}\left( \left[ \mathbf{z}_{\mathrm{cls}}; \mathbf{Z}_{v_1}; \mathbf{Z}_{v_2}; \dots; \mathbf{Z}_{v_{|\mathcal{V}|}} \right] \right),\notag \\ 
    &\text{where }\forall  v \in \{v_1,v_2,...,v_{|\mathcal{V}|}\}:\mathbf{Z}_{v}  = \mathbf{x}_{v}\mathbf{E} +  \mathbf{p}^{\mathrm{enc}}_{v}. \label{eq:encoder_op}
\end{align}
}%
In \cref{eq:encoder_op}, $f_{\mathrm{enc}}(\cdot)$ is the Transformer block, $\mathbf{z}_{\mathrm{cls}} \in \mathbb{R}^{1\times D}$ is a classification label added before the sequence to concatenate these individual feature vectors to form a one-dimensional input sequence. For $v\in\{v_1,v_2,...,v_{|\mathcal{V}|}\}$, $\mathbf{Z}_{v}$ characterizes the deep feature of the patch $X_{v}$, $\mathbf{x}_i^{1\times P^2}$ is the flattened a 1D raw vector from $X_i^{1\times P^2}$, $\mathbf{E} \in \mathbb{R}^{P^2\times D}$ is a weight matrix to convert the $ P^2$ raw pixels into a $D$-dimensional feature vector, and $\mathbf{p}^{\mathrm{enc}}_{i}$ is a positional encoding superimposed onto each feature vector to retain the %2D spatial geometry of the 
original Mel-spectrogram feature. The details to derive \cref{eq:encoder_op} %in the encoder 
is given in \cite[Appendix~C]{complementary}.

\subsubsection{Decoder}
The decoder uses $\mathbf{Z}_{\mathrm{enc}}$ to estimate the missing acoustic information in $\ddot{\mathbf{X}}$ and reconstruct a complete Mel-spectrogram $\tilde{\mathbf{X}} = \{\tilde{X}_n\in \mathbb{R}, n\in \{1,2,...,N\} \}$. For $n \in \{1, 2, \dots, N\}$, $\tilde{X}_n$ from the decoder is formulated as
\vspace{-0.3cm}

{\small 
\begin{align}
    &\tilde{X}_n = \mathcal{R}\left(\tilde{\mathbf{z}}_n\mathbf{W}_{\mathrm{pred}} + \mathbf{b}_{\mathrm{pred}}\right),\notag \\
    &\tilde{\mathbf{z}}_n \in f_{\mathrm{dec}}\left( \left[ [\mathbf{z}_{\mathrm{cls}}'; \mathbf{h}_1; \mathbf{h}_2; \dots; \mathbf{h}_{N}] + \mathbf{P}_{\mathrm{dec}} \right] \right),\notag \\
    &\mathbf{h}_i = \begin{cases} \mathbf{z}_{i} \mathbf{W}_{\mathrm{dec}}, & \text{if } i \in \mathcal{V}.
\\ \mathbf{t}_{\mathrm{mask}}, & \text{if } i \in \mathcal{U}. \end{cases} \label{eq:decoder_op}
\end{align}
}%
In \cref{eq:decoder_op}, \ $\mathcal{R(\cdot)}$ denotes the reshaping operator, $\mathbf {W}_ {\mathrm{pred}} \in \mathbb{R}^{D \times P^2}$ is responsible for mapping the $D$-dimensional hidden information back into $P^2$ distinct numerical intensities, $\mathbf {b}_ {\mathrm{pred}} \in \mathbb{R}^{1\times P^2}$ is applied to shift the predicted values to their correct physical baseline, $\tilde{\mathbf{z}}_n$ represents the $n$-th estimated feature vector with the missing acoustic fragments being successfully recovered, $\mathbf{z}_{\mathrm{cls}}'$ is the updated classification label prepended to the sequence's beginning, $\mathbf{z}_i$ is extracted from $\mathbf{Z}_{\mathrm{enc}}$ by multiplying a feature dimensional weight matrix $\mathbf{W}_{\mathrm{dec}}$, and $\mathbf{t}_{\mathrm{mask}}$ serves as a shared learnable placeholder indicating that the patch has been masked. The details to derive \cref{eq:encoder_op} in the encoder are given in \cite[Appendix~D]{complementary}.

\vspace*{-0.2cm}
\subsection{Masked Encoder based Anomaly Detection}
\label{sec: topK}
\vspace*{-0.1cm}
%During the flight of the drone, the distance between the drone and the victim constantly changes, resulting in significant fluctuations in the received volume. To eliminate the impact of these volume fluctuations and force the MAD model to focus only on the structural patterns of the sound rather than absolute loudness, we applied block by block normalization. For each initial image patch $X_n \in \mathbf{X}$, the normalized patch $\mathbf{y}_n$ is given by $\mathbf{y}_n = (X_n - \mu_n)/\sigma_n$, where $\mu_n$ and $\sigma_n$ are the mean and standard deviation of the pixel values within the $n$-th patch, respectively. Because the network was trained exclusively on normal background noise, its internal weights are strictly optimized to reconstruct only familiar environmental sounds. Consequently, when an anomalous acoustic noise (e.g., a human voice) occurs, The MAD fails to reconstruct these unlearned noise accurately, leads to significant errors between the restructured image patch $\hat{\mathbf{y}_{i}}$ and the initial image patch$\mathbf{y}_{i}$. 
To prevent large-energy background noises from masking anomaly signals, a Top$-K$ scoring strategy is used to calculate $D_{\mathrm{re}}$, i.e.,  the mean squared error of the top $K\%$ image patches (their index set is denoted as $\mathcal{K}$) with the largest reconstruction errors. Thus, the values of $D_{\mathrm{re}}$ can be calculated by
\vspace{-0.2cm}

{\small 
\begin{equation}
D_{\mathrm{re}} = \frac{1}{|\mathcal{K}|} \sum_{n \in \mathcal{K}} \|\bar{X}_n - \tilde{X}_n\|_2^2,
\label{eq:topk_score}
\end{equation}
}%

\vspace{-0.2cm}
\noindent where $\bar{X}_n = (X_n - \mu_n)/\sigma_n$ is the normalized ground-truth image patch with $\mu_n$ and $\sigma_n$ being the mean and standard deviation of the pixel values in $X_n$, respectively. $\tilde{X}_n$ is the image patch reconstructed from the MAE model. When $D_{\mathrm{re}} > D_{\mathrm{th}}$, an anomaly is detected, triggering the ring buffer and waking up the Responder stage for multi-channel localization.

\vspace*{-0.2cm}
\section{Experiments and Results}
\label{sec:experiments}
\vspace*{-0.1cm}
%In this section, comprehensive experiments are conducted to evaluate the effectiveness of V$^2$SDLS. %We first fine-tune the optimal masking ratio ($\rho$) of MAE and evaluate MAE-enabled detection accuracy (during Sentinel stage) for two typical SAR scenarios (i.e., desert and forest). %Then we demonstrate the system's localization accuracy across a UAV's flight trajectory.
%Then we simulate acoustic scenes in reality based on free space path loss and evaluate the overall performance of the system under SAR tasks.

\subsection{Experiment Settings}
\textit{Audio Dataset:} 
%To ensure the effectiveness of MAE in anomaly detection, it
MAE models are pre-trained to model a variety of \textit{Background Noise} in different SAR scenarios. \textit{Victim Sounds}, conversely, are strictly reserved for the evaluation phase to verify whether the injection of anomalous human vocalizations successfully triggers the anomaly detection condition. %To pre-train MAE models in different environments, i.e., under different background noise.
Thus, we construct two audio datasets below: 
\begin{itemize}[leftmargin=*,align=left]
\small
 \item \textit{Noise Dataset} includes two types of noise audio samples: i) the UAV ego-noise, using audio recordings ($133.3$ seconds) of a DJI drone from~\cite{MAEmodels}, which covers dynamic rotor noise and motor harmonics under various flight states, e.g., hovering, ascending, and cruising; ii) two environmental noise, including ``desert''-scenario audio recordings ($180.2$ seconds) of wind and arid environmental sounds from~\cite{audio_desert1, audio_desert2} and ``forest''-scenario audio recordings ($669.8$ seconds) of natural vegetation and bird sounds from~\cite{audio_forest}.
\item \textit{Victim Sounds} consist of audio recordings ($11,182$ seconds) of real human distress vocalizations, covering audio clips of children crying ($8,639$ seconds) and male shouts for help ($2,543$ seconds), primarily extracted from \cite{landry2020asvp}. %During the model evaluation phase, these unseen signals are mathematically superimposed onto the background noise at various SNRs. This operation precisely simulates the actual acoustic scenarios where a victim's voice is buried within the background noise.
\end{itemize}
For the feasibility of audio dataset in SAR missions~\cite {liu2018soundscape,gupta2026droneaudioset}, audio power are scaled as follows: the ``desert"-scenario audio $\sim$ $25$ dB, the ``forest"-scenario audio $\sim$ $35$ dB, the UAV ego-noise recording $\sim$ $75$ dB, and the victim sounds $\sim$ $120$ dB\footnote{We consider that the victim was in a state of distress, and the screaming sound he emitted far exceeded the normal volume. According to the shouting volume experiment of \cite {boren2013maximum}, we set the victim sound to $120$ dB.}. %Then, we think that the victim was in a state of distress, and the screaming sound he emitted far exceeded the normal volume. According to the shouting volume experiment of \cite {boren2013maximum}, we set the victim sound to 120 $dB$. %All of the above sound sources comply with free space propagation attenuation.

\textit{Pretraining of MAE models:} 
%To evaluate the proposed method, extensive model training and simulations are conducted. 
The pre-training of MAE models is %built on an Apple MacBook M5 Pro equipped with 64 GB of RAM. The software environment is 
implemented in Python 3.14 based on the PyTorch 2.10.0 deep learning framework. 
Using \textit{Noise Dataset}, multiple MAE models (with different masking ratios $\rho$) are pre-trained to capture the acoustic features using the UAV noise mixed with the background noise of desert and forest. %Each MAE model is trained in $60$ epochs, with the \textit{Background Noise} dataset divided into a 90\% training set and a 10\% validation set for cross-validation, and trained using the AdamW optimizer and mixed-precision. Finally, used the complete dataset for fixed learning of the \textit{MAE} model, this \textit{MAE} model will be used in subsequent experiments. 
For page-limit reasons, detailed training processes are given %and the MAE training hyperparameters 
in \cite{MAEmodels}.

\textit{Acoustic Scene Simulation:} 
To simulate the observed victim's sound audio, the test audio needs to be calibrated using sound attenuation along its propagation. With the UAV's altitude, LoS-dominated sound attenuation is used, % from the victim to the UAV, %i.e., the received signal power equates to the initial source level minus the total path loss (
i.e., $1/d^{\alpha}$, where $d$ is the distance from the victim to the UAV and $\alpha=\{2,2.5\}$ are the path loss factors in the desert and forest scenarios, respectively. $\alpha=2$ indicates a free space propagation and $\alpha=2.5$ accounts for the scattering of dense vegetation. %Thus, the observed energy of the victim sound is obtained by converting the SNR ratio from decibels to a linear ratio and multiplying it by the background noise energy. Next, the scaling factor is calculated by dividing the target energy by the original energy of the original victim audio. Finally, 
Given different values of $d$ and $\alpha$ in two scenarios, the victim's waveform is multiplied by the attenuation $1/d^{\alpha}$. 

\textit{Test Audio Simulation:} 
Victim sounds are added to the background noise to measure the system performance. Particularly, each MAE model and its enabled system performance are measured by a simulated test audio. The test audio has $12$ seconds and is generated by injecting a $2$-second anomaly audio (combining two randomly chosen $1$-second audios from different victim sounds) into a $10$-second background noise (the mixed UAV-ego noise and environmental noise). The injected position is random in the $12$-second test audio. 

\begin{figure}[t]
        %\captionsetup[subfigure]{justification=centering}
	\centering
    %\subfloat[Forest Scenario.]{
        \includegraphics[width=8.5cm]{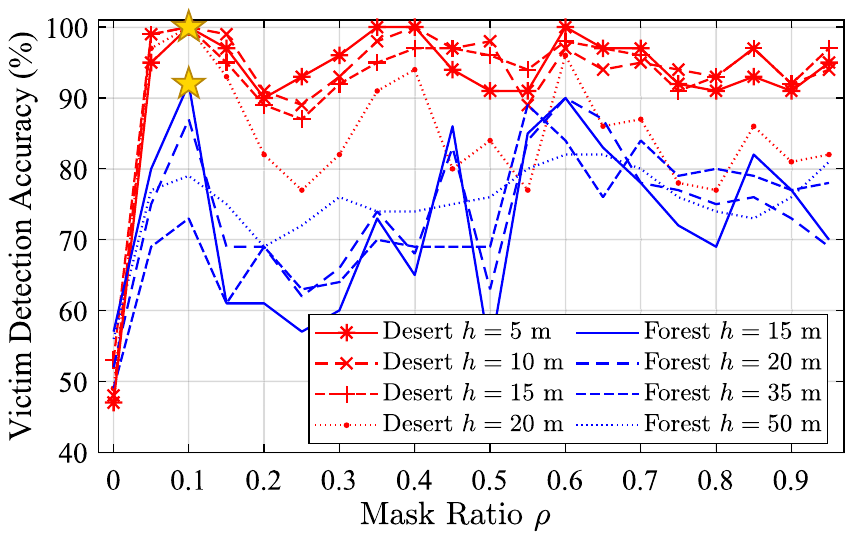} 
    \caption{Anomaly detection accuracy of MAE models versus the masking ratios $\rho$ and two scenarios: desert and forest. The yellow star indicates the largest accuracy in two scenarios. }
    \label{fig:MAE}
	 \vspace{-0.3cm}
\end{figure}

\begin{figure*}[t]
    \centering
    \includegraphics[width=17.0cm]{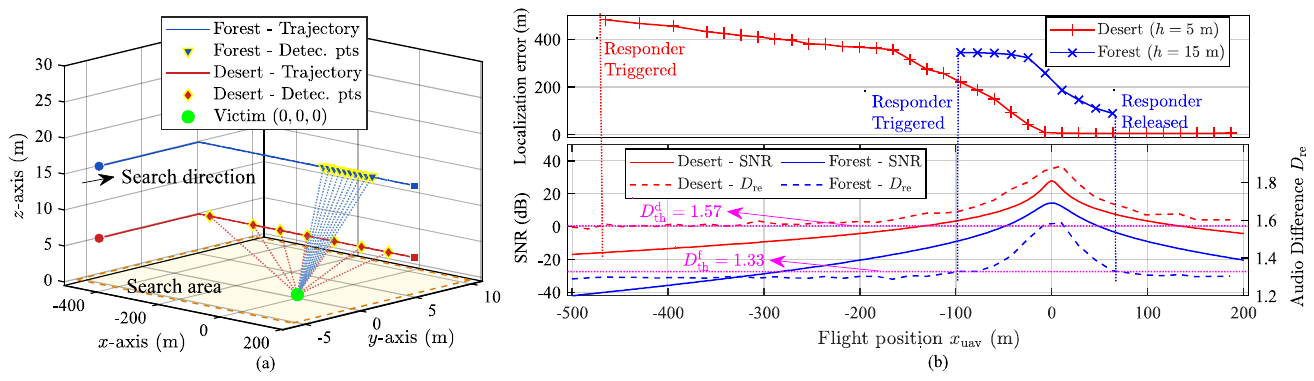}
    % \subfloat[]{
    % \includegraphics[width=5.8cm]{Images/flyover_3d.pdf}
    %     \label{fig:flyover}
    % }
    % \hfil
    % \subfloat[]{
    % \includegraphics[width=6.3cm]{Images/result_SNR_detection.pdf}
    %     \label{fig:SNR_d}
    % }
    % \hfil
    % \subfloat[]{
    % \includegraphics[width=4.1cm]{Images/result_localization.pdf}
    %     \label{fig:localization_b}
    % }
    \caption{The continuous localization results of ``Sky-Ear'' along a UAV's trajectory in SAR. Two scenarios, i.e., the desert and forest, are considered by applying $\{h = 15\text{ m},\rho=0.1 \text{(for MAE)}\}$ and $\{h=5\text{ m}, \rho=0.1\text{(for MAE)}\}$, respectively. The victim's location is set as $(0,0,0)$, the initial takeoff coordinates for the UAV are $(-500,-4,15)$ in the forest and $(-500,-4,5)$ in the desert, it takes a short flight segment along the $y$-axis from $y_{\mathrm{uav}}=-4 \sim 5\text{ m}$ on a fixed route, and cruises along the $x$-axis from $x_{\mathrm{uav}}=-500 \sim 200\text{ m}$ at $y_{\mathrm{uav}}=5\text{ m}$.}
    \label{fig:flyover}
    \vspace{-0.3cm}
\end{figure*}

\textit{System performance metrics:} The system performance is evaluated in two metrics: i) the \textit{detection Accuracy}, calculated as a MAE model’s successful detection rate averaged over evaluations of this model in $100$ testing audios; ii) the \textit{localization error}, calculated as the average Euclidean distance between the estimated position and the true victim position for each UAV's observation. Recall that an anomaly is detected when $D_{\mathrm{re}}$ (MAE's output) exceeds $D_{\mathrm{th}}$. The threshold $D_{\mathrm{th}}$ is set to 1.57 and 1.33 in desert and forest scenes. 

%\begin{figure}[t]
%    \centering
%    \includegraphics[width=\columnwidth]{Images/fig_flyover.png}
%    \caption{...}
%    \label{fig:flyover}
%\end{figure}
\vspace{-0.2cm}
\subsection{Find-tuning and Evaluation of MAE}
\label{subsec: mae_exp}
\vspace*{-0.1cm}
Based on the \textit{Victim Sounds} dataset, we first fine-tune and measure MAE models for anomaly detection. In our measurement, both the victim and the UAV are in the same horizontal position. Different UAV altitudes $h$ are considered, which inherently leads to varying received signal-to-noise (SNRs). Considering the safety constraints of practical UAV operations, we set $h=\{5,10,15,20\}$ m and $h=\{15,20,35,50\}$ m for the desert and forest scenarios, respectively. A total of $34$ MAE models are measured in terms of $17$ masking ratios ($\rho$ ranging from $0.00$ to $0.90$) and $2$ typical SAR scenarios, i.e., desert and forest acoustic environments. According to the measured results, the best $\rho$ can be observed and thus fine-tuned. \cref{fig:MAE} shows the experiment results of anomaly detection accuracy of $34$ MAE models under different UAV heights. The fluctuation occurs because of the randomly generated test audio for each measurement. 
%$\rho$is tested for different flight heights in \textit{Desert} and \textit{Forest} scenarios. 
%It is observed that, in the desert scenario, the system achieved a largest detection accuracy of 100\% when $h = 5$ m and $\rho = 0.10$. Comparatively, in forest scenario, a largest accuracy reaches approximately 93\% whem $\rho = 0.10$ and $h = 15$ m, which is slightly smaller because of the more complex propagation conditions. 
Overall, it is observed that the largest accuracy of all MAE models occurs in a relatively low masking ratio, i.e., $\rho = 0.10$. This is because a low masking rate ensures that the decoder receives sufficient contextual acoustic features to accurately identify structural deviations. %On the contrary, an excessively high masking rate will cause the network to lack necessary acoustic clues, inevitably leading to reconstruction collapse and thus reducing detection reliability. 
Comparing two scenarios, the forest scenario obtains smaller detection accuracy because of more complex propagation conditions. For each MAE model, the lowest altitude corresponds to the best accuracy because of the shorter distance. 

\vspace{-0.2cm}
\subsection{System-level Performance}
\vspace*{-0.1cm}
To evaluate the continuous localization performance of ``Sky-Ear'', %we constructed a flight simulation scene in 3D space. %The experiment assumes that the victim $P$ is stationary at the ground coordinate origin and continues to call for help (a sound pressure level of 120 dB). The UAV executes a predefined L-shaped search trajectory. 
multiple observations are conducted along a UAV's trajectory. 
%To accurately simulate real-world acoustic scenarios, we dynamically model the signal propagation based on Spherical-Spreading-Loss. 
%Specifically, in an unbounded free field, when sound waves propagate outward from a ground victim, the energy of the sound waves spreads outwards like an expanding invisible balloon (spherical wave), and the total emitted sound power is evenly distributed on the expanding sphere. Therefore, the decrease in sound intensity is inversely proportional to the square of spatial distance ($P\propto 1/d^2$).
\cref{fig:flyover} shows the continuous localization results of ``Sky-Ear'' in two SAR scenarios. \cref{fig:flyover} (a) illustrates the UAV's flight trajectory, and \cref{fig:flyover} (b) records the dynamic evaluation of the system along the trajectory. Overall, experimental results verify the effectiveness of our system. It is observed that the UAV initially approaches from a distance; thus, both the received SNR and $D_\mathrm{re}$ remain extremely low. With the approach of the drone to the victim, both the received SNR and $D_\mathrm{re}$ increase. We can see a sequence of blank spaces ``Localization Error", indicating that the responder stage is not triggered and the system is still in the sentinel stage during long-distance flight. Once the responder stage is triggered for localization, we can observe that, as the drone approaches the victim, the positioning error drops sharply. 
%\begin{itemize}
%\item \textbf{Sentinel stage (orange line):} 
%As depicted by the orange trajectory on the left, the UAV initially approaches from a distance. Correspondingly, in the subplots (``SNR \& Detection''), both the received SNR (solid green line) and the MAE reconstruction error (dashed $orange$ line) remain extremely low. The UAV did not receive the sound signal from victim P, and the MAE did not trigger any anomaly signals. As the UAV approaches the victim,  both the SNR and the MAE error to spike simultaneously and reach their absolute peaks exactly at $X = 0$.
%This indicates that during long-distance approach, the system remains in a lightweight cruising mode, as low SNR ratio renders localization useless. Only when the UAV approaches an uneducated distress signal, its reconstruction ability will be compromised, and the reconstruction errors $D$ will be greater than the threshold $D_\mathrm{th}$ we set, triggering the Respondent localization phase.
%\item \textbf{Responder stage (blue line):} Once the MAE error breaches the detection threshold the peak, the system immediately switches to the Responder stage, as shown in the subgraph in the bottom left corner(blue trajectory), initiating high-frequency spatial acoustic sampling (dashed projection lines). Simultaneously, as shown in the lower right subplots (``Localization Error''), as the drone approaches the victim, the positioning error drops sharply, and the final positioning error is less than $5m$.
%\end{itemize}
Comparing the two scenarios, %the desert environment exhibits a sharply defined peak in both SNR and detection scores, accompanied by a rapid convergence in localization error. %Conversely, 
the forest scenario presents a noticeably blunter peak and a delayed localization convergence. This performance discrepancy fundamentally stems from the constraints of the flight altitude and canopy attenuation. In the forest, the higher altitude causes the relative propagation distance $d$ to change more gradually as the UAV flies over the victim; thus, the received SNR rises more smoothly. Furthermore, the continuous localization requires a longer flight path to accumulate sufficient spatial disparity, which directly explains the delayed convergence in the localization error. 

\vspace{-0.2cm}
\section{Conclusion}
\label{sec:conclusion}
\vspace{-0.1cm}
%The framework centers on a dual-stage Sentinel-Responder architecture bridged by a retroactive ring buffer. By utilizing a scenario-adaptive MAE design to maintain low idle power and leveraging historical data to eliminate wake-up information loss, the system achieves significant energy savings while maintaining robust spatial localization.
We proposed ``Sky-Ear'', an energy-efficient acoustic-based victim-detection-and-localization system for UAV-enabled SAR missions. The system functions as a circular microphone array, two-stage (Sentinel and Responder) audio processing, and continuous localization. The Sentinel stage uses an MAE on Mel-spectrograms to learn background and UAV noise and detect victim sounds as anomalies, while the Responder stage performs detailed analysis only when needed for computation. A continuous localization method is designed by optimizing detected directions from multiple observations. Using custom audio datasets with environmental, propeller, and victim sounds, simulation results show that ``Sky-Ear'' achieves accurate victim detection and improved localization by multiple observations along the UAV trajectory, demonstrating its effectiveness. In future work, we plan to expand our pre-training datasets to real-world UAV deployments, e.g., capturing complex aeroacoustic dynamics, such as severe rotor turbulence and structural vibrations

% We proposed Sky-Ear, an energy-efficient acoustic system for UAV-enabled SAR missions. It utilizes a circular microphone array and two-stage audio processing: an MAE-based Sentinel stage for low-power anomaly (victim sound) detection, and an on-demand Responder stage for precise localization. By optimizing detected directions across continuous flight observations, simulations on custom datasets verify that Sky-Ear achieves highly accurate localization while significantly minimizing computational overhead.

% In future work, we will focus on transitioning Sky-Ear from simulated environments to real-world UAV deployments. To mitigate complex aeroacoustic dynamics, such as severe rotor turbulence and structural vibrations, we plan to implement hardware-level isolation and significantly expand our pre-training datasets. Furthermore, we will prioritize advancing our localization framework. By exploring TDoA estimation techniques, including phase-weighted GCC-PHAT and adaptive filtering, we aim to enhance the stability of spatial direction estimation under dynamic noise conditions. Finally, these deployments will enable comprehensive quantitative evaluations of the system's energy consumption, computational overhead, and response latency.
\vspace{-0.2cm}

\bibliographystyle{IEEEtran}
\bibliography{references}

\end{document}